\documentclass[letterpaper, 10pt, conference]{ieeeconf}      
\IEEEoverridecommandlockouts                              
\overrideIEEEmargins

\usepackage{graphics}
\usepackage{amssymb}  
\usepackage{mathtools}

\usepackage{hyperref}
\usepackage{cite}
\usepackage{multicol}
\usepackage{color}
\usepackage{url}
\usepackage{caption}
\usepackage{subcaption}

\newtheorem{theorem}{Theorem}

\newtheorem{definition}{Definition}
\newtheorem{proposition}{Proposition}

\title{\LARGE \bf
	Analog cross coupled controller for oscillations: \\ modeling and design via dominant system theory}

\author{Weiming Che, Thomas Chaffey, Fulvio Forni
	\thanks{W. Che is supported by CSC Cambridge Scholarship. T. Chaffey is partially supported by the Advanced ERC Grant Agreement Switchlet n. 670645. W. Che, T. Chaffey and F. Forni are with the Department of Engineering, University of Cambridge, CB2 1PZ, UK {\tt\small wc289|tlc37|f.forni@eng.cam.ac.uk}}}

\begin{document}
	
	\maketitle
	\thispagestyle{empty}
	\pagestyle{empty}

	\begin{abstract}
	 We propose a new analog feedback controller based on the classical cross coupled electronic oscillator. The goal is to drive a linear passive plant into oscillations. We model the circuit as Lur'e system and we derive a new graphical condition to certify oscillations (inverse circle criterion for dominance theory). These conditions are then specialized to minimal control architectures like RLC and RC networks, and are illustrated with an example
	 based on a DC motor model.
	\end{abstract}

	\section{Introduction}
    The cross coupled oscillator is a classical circuit architecture in RF technology, widely employed in electronic devices to realize function generators, phase-lock loops, frequency synthesizers, etc.  \cite[Chapter 8]{razavi2012rf}. The basic configuration of the cross coupled oscillator contains a cross coupled pair, XCP, and two identical RLC tank circuits, $C$. This configuration corresponds to the circuit in Figure \ref{fig:XCP_plant}, when the additional circuit $P$ is removed (open circuit). The role of the XCP as a bounded differential negative resistance is to restore the energy lost over time by the resonant RLC tank circuits, to achieve steady oscillations. From a system-theoretic perspective 
    the XCP introduces positive feedback to the circuit \cite{razavi2014cross,razavi2015cross}, which is crucial for the
    the realization of a wide range of robust non-equilibrium behaviors, like oscillators \cite{hu1986self,li2000active}, bistable switches \cite{kennedy1991hysteresis,chen2009negative} and chaos \cite{kennedy1993three}. 

	In this paper, we adapt the classic cross coupled oscillator for control purposes. The goal is to develop an analog controller that drives a plant into oscillations, in closed loop. In our approach, $C$ is a generic linear control circuit that we will design. The circuit will control a voltage driven linear passive plant, $P$, through physical port interconnection, as shown in Figure \ref{fig:XCP_plant}. The controller can potentially drive any voltage driven electro-mechanical devices, such as electric motors, solenoids, or electroactive polymers. 
	The main motivation for our study comes from robotic locomotion. Walking, running, and swimming are all characterized by specific rhythmic patterns that regulate the interaction with the environment
	\cite{kimura1999realization},\cite{ijspeert2007swimming,ijspeert2020amphibious}.  In these settings, microcontrollers are typically used to generate rhythmic reference signals  to which the robot body is entrained via an amplification / actuation stage, see e.g. \cite[Table 1]{ijspeert2020amphibious}. This means that the mechanical features of the robot body and its interaction with the environment do not affect the frequency of the rhythm. In contrast, our controller is an analog circuit that directly drives the plant $P$, generating oscillations in feedback from the plant. This means that the plant dynamics play an active role in the generation of oscillations. Frequency and shape of the generated rhythm are thus sensitive to the features of the robot/environment,
	potentially enabling their adaptation in feedback from
	the environment.
	
    In electronics, the design of the cross coupled oscillator is usually approached via local methods, like  the Barkhausen’s criterion \cite{razavi2012rf}. The idea is to use positive feedback to destabilize the equilibrium of the RLC tank circuit in a controlled way, to achieve oscillations. Oscillations are certified by the Poincar{\'e}-Bendixson theorem \cite{Hirsch1974}, taking advantage of the planar dynamics of RLC tank circuits. This makes it difficult to scale these approaches to the case of large dimensional control networks $C$ and plants $P$. Instead, 
    our approach is based on the theory of dominant systems, which provides various tools for the analysis and design of robust non-equilibrium behaviors \cite{forni2018differential,miranda2018analysis}, and has been used to analyze oscillatory circuits based on mixed feedback \cite{che2021dominant} and negative resistors \cite{miranda2022dissipativity}. The intuition is that the attractors of a  $2$-dominant system correspond to the attractors of a planar system. Thus, the Poincar{\'e}-Bendixson theorem can be used on a large dimensional $2$-dominant system to certify oscillations. Indeed, our design will achieve oscillations by building a closed-loop system that is $2$-dominant, has only unstable equilibria, and has bounded trajectories. 
    
\begin{figure}[h]
\centering
\begin{subfigure}{.24\textwidth}
  \centering
  \includegraphics[width=.8\linewidth]{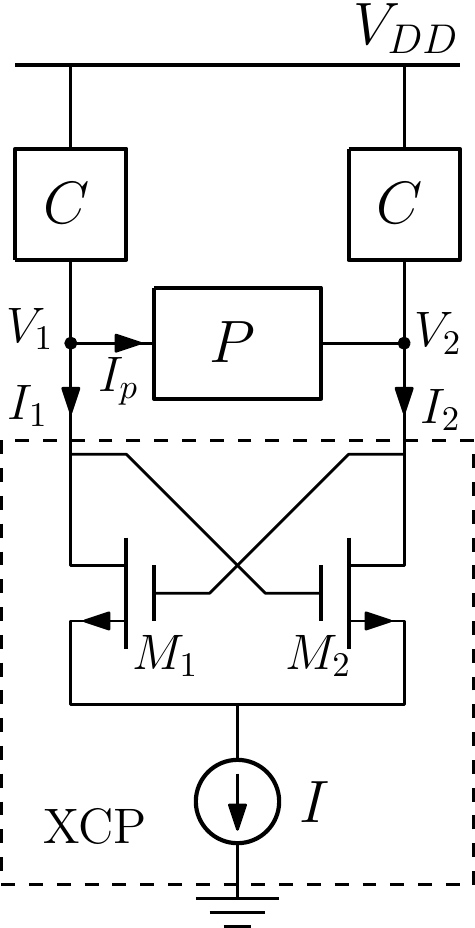}
  \caption{Circuit representation.}
  \label{fig:XCP_plant}
\end{subfigure}%
\begin{subfigure}{.24\textwidth}
  \centering
  \vspace{9mm}
  \includegraphics[width=.8\linewidth]{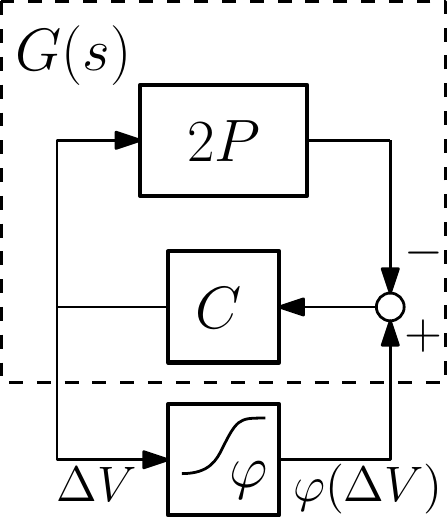}
  \vspace*{18mm}
  \caption{Block diagram representation.}
  \label{fig:XCP_OSC_Block}
\end{subfigure}
\caption{The cross coupled oscillator circuit driving a voltage controlled linear passive plant $P$.}
\end{figure}
    
    We will model the cross coupled controller as a Lur'e feedback system, as shown in Figure \ref{fig:XCP_OSC_Block}. The port connection of the plant network $P$ and controller network $C$ forms the network $G$ which is in positive feedback with a sigmoidal nonlinearity $\varphi$. The latter 
    physically implemented by XCP. Our goal is to design $C$ and the slope of $\varphi$ to achieve oscillations. Our design is based on a new inverse circle criterion for dominance, combined with feedback gain tuning for the instability of closed-loop equilibria. 
    The paper is organized as follows. Section \ref{sec:modeling} introduces the cross coupled controller and  derives the block diagram in Figure \ref{fig:XCP_OSC_Block}. Section \ref{sec:dominance} gives a brief introduction to dominance theory and develops graphical conditions / the inverse circle criterion to achieve $2$-dominance in closed loop. Section \ref{sec:instability} derives conditions on the linear control circuit $C$ and on the nonlinearity $\varphi$ for unstable closed-loop equilibria. These conditions combined to the results on $2$-dominance provide a complete characterization of the design procedure for oscillations.
    Section \ref{sec:RLCcontroller} shows how to achieve oscillation with control circuits $C$ limited to simple $RLC$ and $RC$ networks. 
    Section \ref{sec:example} completes the paper with a design example of a DC motor controlled into oscillation. Conclusion follows.
    
\section{The cross coupled controller}\label{sec:modeling}
We extend the use of  cross coupled oscillator as an analog controller to drive the oscillation of a plant $P$, as shown in Figure \ref{fig:XCP_plant}. The plant $P$ is an admittance driven by the voltage difference, $V_1-V_2$, which determines the  output current  $I_p$. There is no other power supply therefore $P$ is a passive circuit. We assume $P$ is represented by the transfer function $P(s)$. The controller $C$ is also a linear circuit, whose impedance is represented by the  transfer function $C(s)$. 

The transistors $M_1$, $M_2$ and the current source form the XCP circuit and provide the main nonlinearity of the system. By taking the large signal description of the transistors, such nonlinearity is characterized by a sigmoid function $\varphi$. This leads to the equivalent block diagram for the cross coupled controller  shown in Figure \ref{fig:XCP_OSC_Block}.

The derivation of the nonlinearity $\varphi$ is based on the Shichman-Hodges equation \cite{shichman1968modeling} and on the large signal description of the MOSFET differential amplifier, e.g. see \cite[Chapter 9.1]{sedra2004microelectronic}. Taking $I_1$ and $I_2$ as the currents that flow through the transistors $M_1$ and $M_2$ respectively, we have
\begin{equation*}
	I_{i}=\begin{cases}
		\frac{I}{2} \!+\! (-1)^{i-1}\frac{\sqrt{k_nI}\Delta V}{2}\sqrt{1-\frac{k_n\Delta V^2}{2I}}, \ &|\Delta V|\leq\sqrt{\frac{2I}{k_n}}\\
		\frac{I}{2}\!+\! (-1)^{i-1}\frac{I}{2}\text{sgn}(\Delta V), \ &\text{Otherwise}
	\end{cases}
\end{equation*}
where 
$\Delta V := V_2-V_1 = V_{G_1}-V_{G_2}$ is the differential input voltage to the gates of $M_1$ and $M_2$. $k_n$ is the transistor gain and $I$ is the total current flowing in the circuit, modulated by the current source in Figure \ref{fig:XCP_plant}.

By taking $\Delta I := I_1-I_2$, the nonlinear differential relationship $\Delta I=\varphi(\Delta V)$ reads
\begin{equation}\label{eq:XCP_IV_relationship}
	\varphi(\Delta V)=\begin{cases}
		\sqrt{k_nI}\Delta V\sqrt{1-\frac{k_n\Delta V^2}{4I}},\quad &|\Delta V|\leq\sqrt{\frac{2I}{k_n}}\\
		I\text{sgn}(\Delta V), \quad &\text{Otherwise}
	\end{cases}
\end{equation}
which corresponds to a differentiable sigmoid function with the sector condition
\begin{equation}\label{eq:sector_cond}
    \partial\varphi\in[0,K], \quad K=\sqrt{k_nI}.
\end{equation} The slope of $\varphi$ in the linear regime near zero is controlled by the total current $I$.

For the linear components of the circuit, 
\begin{equation*}
    I_p=P(s)(V_1-V_2) = -P(s)\Delta V.
\end{equation*}
Thus, using Kirchhoff circuit law,
\begin{equation*}
	\begin{cases}
	V_1=V_{DD}-(I_1+I_p)C(s),\\
	V_2=V_{DD}-(I_2-I_p)C(s),
	\end{cases}
\end{equation*}
which leads to
\begin{equation}\label{eq:G(s)}
	\Delta V=\frac{C(s)}{1+2P(s)C(s)}\Delta I =: G(s) \Delta I.
\end{equation}
G(s) corresponds to the passive transfer function of the negative feedback of $C(s)$ with $2P(s)$.

\eqref{eq:XCP_IV_relationship} and \eqref{eq:G(s)} 
together form the closed-loop system  represented in Figure \ref{fig:XCP_OSC_Block}. The closed loop given by the XCP, the controller $C$ and the plant $P$ is a Lur'e feedback system, represented by the  transfer function $G(s)$ in \emph{positive feedback} with the $\varphi$. Our \emph{control objective} is to drive the closed loop into steady oscillation by (i)~shaping $G(s)$ via the design of $C(s)$ and by (ii)~tuning the nonlinear positive feedback $\varphi$ via the current $I$. 

\section{Control design for 2-dominance} \label{sec:dominance}
\subsection{Dominant systems} \label{sec:dom_sys}
We will use dominance theory \cite{forni2018differential} to derive oscillation conditions for the XCP controller. We recall here a few basic definitions and results on dominance theory that we will need in the rest of the paper. 

\begin{definition}
		The nonlinear system $\dot{x}=f(x)$, $x\in\mathbb{R}^n$, is $p$-\emph{dominant with rate} $\lambda\geq0$ if and only if there exist a symmetric matrix $P$ with inertia $(p,0,n-p)$ and $\varepsilon\geq0$ such that the matrix inequality
		\begin{equation}
		\label{eq:dominance_LMI}
			\partial f(x)^TP + P\partial f(x) + 2\lambda P  \leq -\varepsilon I
		\end{equation}
		holds for all $x\in\mathbb{R}^n$. The property is strict if $\varepsilon>0$.$\hfill\lrcorner$
	\end{definition}
	The inertia $(p,0,n-p)$ means that $P$ has $p$ negative real eigenvalues, $0$ imaginary eigenvalues, and $n-p$ positive real eigenvalues. Feasibility of \eqref{eq:dominance_LMI} indicates that the system consists of slow dominant dynamics of dimension $p$, and fast transient dynamics of dimension $n-p$, split by the rate $\lambda$. The asymptotic behavior of a $p$-dominant system
	corresponds to the behavior of a system of dimension $p$. This claim is made precise for $2$-dominance by the following theorem, from \cite{forni2018differential}. 
\begin{theorem}\label{th:p-attractor}
		For a strict $2$-dominant system with rate $\lambda \ge 0$, every bounded trajectory asymptotically converges to a simple attractor, that is, a fixed point, a set of fixed points and connecting arcs, or a limit cycle (oscillation).$\hfill\lrcorner$
	\end{theorem}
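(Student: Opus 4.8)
The plan is to translate the matrix inequality \eqref{eq:dominance_LMI} into the forward invariance of a cone of rank two under the flow, and then to run a Poincar\'e--Bendixson-type argument on that structure.

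\emph{Step 1 (from the LMI to a contracting cone field).} Associate to $P$ the quadratic cone $\mathcal{K} := \{ v \in \mathbb{R}^n : v^T P v \le 0 \}$. Since $P$ has inertia $(2,0,n-2)$, $\mathcal{K}$ contains a $2$-dimensional linear subspace but no $3$-dimensional one, and it is symmetric ($\mathcal{K}=-\mathcal{K}$); it is a cone of rank two, with interior $\{v: v^T P v < 0\}$. Along a solution $x(\cdot)$ of $\dot x=f(x)$, the variational equation $\dot{\delta x}=\partial f(x(t))\delta x$ and \eqref{eq:dominance_LMI} give, with $W=\delta x^T P \delta x$, the bound $\frac{d}{dt}(e^{2\lambda t}W) = e^{2\lambda t}(\dot W + 2\lambda W) \le -\varepsilon e^{2\lambda t}|\delta x|^2$. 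With strictness $\varepsilon>0$ and $\delta x(0)\neq 0$, this quantity is strictly decreasing, so $W(0)\le 0$ forces $W(t)<0$ for all $t>0$: the variational flow maps $\mathcal{K}\setminus\{0\}$ into $\mathrm{int}\,\mathcal{K}$ after any positive time. The same holds for chords of the nonlinear flow: for two solutions $x(\cdot),y(\cdot)$ the difference $z=x-y$ obeys $\dot z = A(t)z$ with $A(t)=\int_0^1 \partial f(y(t)+sz(t))\,ds$, and since the matrices satisfying \eqref{eq:dominance_LMI} form a convex set (it is an LMI affine in $\partial f$), $A(t)$ satisfies it too; hence $z(0)\in\mathcal{K}$ implies $z(t)\in\mathrm{int}\,\mathcal{K}$ for all $t>0$. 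In short, the flow of $\dot x=f(x)$ is strongly monotone with respect to the rank-two cone field generated by $\mathcal{K}$, with exponential discount rate $\lambda$.

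\emph{Step 2 (Poincar\'e--Bendixson for rank-two monotone flows).} I would then invoke the Poincar\'e--Bendixson theory for flows monotone with respect to a cone of rank two, as developed in \cite{forni2018differential} (building on the quadratic-cone techniques of R.\,A.~Smith and on the theory of cones of rank two). A bounded trajectory, by hypothesis, has a nonempty, compact, connected, invariant $\omega$-limit set $\Omega$. The rank-two cone structure forces $\Omega$ to be essentially planar: distinct points on a single trajectory inside $\Omega$ cannot have their difference in $\mathrm{int}\,\mathcal{K}$ for all times, which is the analogue of the Jordan-curve obstruction and lets one build a $C^1$ local transversal on which the first-return map is an essentially one-dimensional ordered map. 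Classifying $\Omega$ accordingly yields the trichotomy: $\Omega$ is a single equilibrium; or $\Omega$ contains equilibria together with heteroclinic/homoclinic connecting arcs; or $\Omega$ contains no equilibrium and is a single periodic orbit. These are exactly the ``fixed point / set of fixed points and connecting arcs / limit cycle'' alternatives, i.e. $\Omega$ is a simple attractor, and the standing boundedness assumption then gives convergence of every bounded trajectory to such an $\Omega$.

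\emph{Main obstacle.} Step 1 is a routine $S$-procedure/convexity computation; the substance is Step 2, the Poincar\'e--Bendixson property in ambient dimension $n>2$. The difficulties there are (i) the higher-dimensional Jordan-curve mechanism --- showing that ``monotone'' arcs cannot produce the complicated recurrence available in $\mathbb{R}^n$, which crucially uses that $\mathcal{K}$ contains a $2$-plane but no $3$-plane; (ii) constructing a transversal and a well-defined Poincar\'e return map near $\Omega$ and proving it is eventually monotone; (iii) handling the degenerate rate $\lambda=0$ and the non-strict behaviour on $\partial\mathcal{K}$; and (iv) excluding higher-genus recurrent sets such as an invariant $2$-torus, which follows once the recurrence is confined to the essentially two-dimensional section. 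I would follow the argument of \cite{forni2018differential} for this part rather than redevelop it.
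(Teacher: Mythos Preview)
The paper does not prove this theorem at all: it is stated as a result taken from \cite{forni2018differential} and left without argument. Your proposal therefore goes further than the paper does. You sketch the cone-contraction step (Step~1) and then, like the paper, defer the substantive Poincar\'e--Bendixson part (Step~2) to \cite{forni2018differential}. The outline you give is the standard route to this result and is sound; there is simply no proof in the paper to compare it against beyond the shared citation.
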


The asymptotic behavior of a $2$-dominant system corresponds to the behavior of a planar system. This allows us to use  Poincar{\'e}-Bendixson theory on $2$-dominant systems, even if their state space is $\mathbb{R}^n$. This suggests a route for the design of the cross coupled controller to achieve oscillations in closed loop. First, we will shape $G(s)$ in \eqref{eq:G(s)} and the nonlinearity $\varphi$ in \eqref{eq:XCP_IV_relationship} such that the closed-loop system is $2$-dominant. Then, using Poincar{\'e}-Bendixson, we will
ensure that every equilibria of the system are unstable. As a result, from Theorem \ref{th:p-attractor}, every bounded trajectory of the system will converge to a limit cycle, achieving oscillations. Note that boundedness of trajectories is always guaranteed, as long as $G(s)$ is a stable transfer function (by BIBO stability).

\subsection{Dominant cross coupled controller} \label{sec:XCP_dominance}
For Lur'e systems, dominance can be verified using graphical conditions like the circle criterion for dominance  \cite[Corollary 4.5]{miranda2018analysis}. Given the sector condition \eqref{eq:sector_cond}, the closed loop in Figure \ref{fig:XCP_OSC_Block} is $2$-dominant with rate $\lambda \geq 0$ if $G(s-\lambda)$ has two unstable poles, no poles on the imaginary axis, and if its Nyquist plot remains to the left of the region of the complex plane
$ \{z \in \mathbb{C}\,|\, \Re(z) \geq \frac{1}{K}\}$.  However, we can not directly use this condition for design. The linear subsystem represented by $G(s)$ in \eqref{eq:G(s)} corresponds to the negative feedback of $2P(s)$ and $C(s)$, which involves multiplication and inversion of transfer functions. Thus it is unclear how do we design $C(s)$ to achieve an admissible Nyquist plot of $G(s-\lambda)$. 

To address such issue, we consider the inverse of the closed loop. The idea is to work with
inverse elements $\varphi^{-1}$ (relational inverse) and $G^{-1}(s)$ to derive an adapted circle criterion 
that allows for the design of $C(s)$. 
From the sector condition \eqref{eq:sector_cond}, we have that $\varphi^{-1}$ satisfies 
\begin{equation}
    \label{eq:inverse_sector}
    \partial\varphi^{-1}\in[1/K,\infty].
\end{equation}
The inverse of $G(s)$ reads 
\begin{equation}\label{eq:G_inv}
	G^{-1}(s)=C^{-1}(s)+2P(s),
\end{equation}
which expresses the feedback relation between $C(s)$ and $P(s)$ as the addition of two transfer functions, one of which inverted. This inverse representation of the cross coupled controller preserves all information of the closed loop but provides an insightful perspective for control design.

\begin{theorem}[Inverse circle criteria for $2$-dominance]\label{th:inverse_circle_cirteria}
		Consider the Lur'e feedback system in Figure \ref{fig:XCP_OSC_Block}, with the static nonlinearity $\varphi$ satisfying sector condition \eqref{eq:sector_cond}. Then the closed system is strictly $2$-dominant with rate $\lambda$ if:
		\begin{enumerate}
			\item $C^{-1}(s)+2P(s)$ has no zeros with real part equal to $-\lambda$;
			\item the Nyquist plot of $C^{-1}(s-\lambda)+2P(s-\lambda)$ has $2-(q+r)$ clockwise encirclement of the origin, where
		\begin{equation*}
		\begin{split}
		    q&= \text{number of poles of $P(s-\lambda)$ in }\mathbb{C}^+,\\
		   r&=\text{number of zeros of $C(s-\lambda)$ in }\mathbb{C}^+.
		\end{split}
		\end{equation*}
			\item the Nyquist plot of $C^{-1}(s-\lambda)+2P(s-\lambda)$ lies outside the disk
			\begin{equation}\label{eq:disk}
				\{z\in\mathbb{C}\,|\, |z-K/2| \leq K/2\}. \vspace{-2mm}
			\end{equation}$\hfill\lrcorner$
		\end{enumerate}
	\end{theorem}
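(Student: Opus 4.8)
The plan is to derive Theorem \ref{th:inverse_circle_cirteria} from the (forward) circle criterion for $2$-dominance recalled just above it, \cite[Corollary 4.5]{miranda2018analysis}, by rewriting each of its hypotheses --- which are stated for $G(s)$ --- as an equivalent condition on the inverse transfer function $G^{-1}(s)=C^{-1}(s)+2P(s)$ of \eqref{eq:G_inv}. Recall that the forward criterion certifies strict $2$-dominance with rate $\lambda$ once: (i) $G(s-\lambda)$ has exactly two poles in $\mathbb{C}^+$; (ii) $G(s-\lambda)$ has no poles on the imaginary axis; and (iii) the Nyquist plot of $G(s-\lambda)$ lies strictly to the left of $\{\Re(z)=1/K\}$. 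I will show that conditions 1--3 of the theorem are exactly (i)--(iii) re-expressed through $G^{-1}$, after which the conclusion is immediate.

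First I would handle (ii) and (iii). Since the zeros of $G^{-1}$ are the poles of $G$, condition 1 (``$C^{-1}+2P$ has no zero with real part $-\lambda$'') is precisely (ii); this step needs a word on coprimeness of $C$ and $P$, or on the fact that a spurious pole--zero cancellation in forming $G$ from $C$ and $P$ does not affect dominance of the underlying state model. Next, (ii) makes the Nyquist plot of $G(s-\lambda)$ bounded, hence the conformal inversion $w=1/z$ carries it bijectively onto the Nyquist plot of $G^{-1}(s-\lambda)$; a short computation shows $w=1/z$ maps the line $\{\Re(z)=1/K\}$ onto the circle $\{|w-K/2|=K/2\}$ and the open half-plane $\{\Re(z)<1/K\}$ onto the exterior of the closed disk \eqref{eq:disk} (one checks, e.g., $z=-1\mapsto w=-1$, which lies outside), so (iii) is equivalent to condition 3.

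Then I would recover (i) from condition 2 by the argument principle applied to $G^{-1}(s-\lambda)$ along the Nyquist contour: its number of clockwise encirclements of the origin equals $Z-P$, where $Z,P$ are the numbers of $\mathbb{C}^+$ zeros and poles of $G^{-1}(s-\lambda)$. Here $Z$ counts the $\mathbb{C}^+$ poles of $G(s-\lambda)$, while the $\mathbb{C}^+$ poles of $C^{-1}(s-\lambda)+2P(s-\lambda)$ split into the $\mathbb{C}^+$ zeros of $C(s-\lambda)$ and the $\mathbb{C}^+$ poles of $P(s-\lambda)$, i.e. $P=q+r$; imposing $Z-P=2-(q+r)$ forces $Z=2$, which is (i). Conditions 1--3 therefore imply (i)--(iii), and the circle criterion for dominance completes the argument.

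I expect the main obstacle to be exactly this pole/zero bookkeeping: tracking multiplicities through the two algebraic operations relating $G$, $G^{-1}$ and the pair $(C,P)$ forces either a standing minimality/coprimeness assumption on $P$ and $C$ or an argument that cancellations are immaterial for dominance, and it needs care at the point at infinity and at any imaginary-axis poles of $C^{-1}(s-\lambda)+2P(s-\lambda)$ that would require indenting the Nyquist contour. A cleaner alternative I would also consider is to observe that the closed loop equals the positive feedback of $G^{-1}(s)$ with the relational inverse $\varphi^{-1}$, whose sector is $[1/K,\infty]$ by \eqref{eq:inverse_sector}, and to invoke the general-sector circle criterion for dominance directly --- the degenerate ``disk'' attached to the sector $[1/K,\infty]$ being precisely \eqref{eq:disk} --- although this route requires the general-sector version rather than the $[0,K]$ instance recalled in the text.
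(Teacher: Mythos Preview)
Your proposal is correct and follows essentially the same route as the paper: reduce to \cite[Corollary 4.5]{miranda2018analysis}, identify the zeros of $G^{-1}=C^{-1}+2P$ with the poles of $G$ for condition~1, apply the argument principle to $G^{-1}(s-\lambda)$ (whose poles are the zeros of $C(s-\lambda)$ and the poles of $P(s-\lambda)$) for condition~2, and use the M\"obius map $z\mapsto 1/z$ to carry the half-plane condition on $G(s-\lambda)$ into the disk condition \eqref{eq:disk} on $G^{-1}(s-\lambda)$ for condition~3. The paper's proof is the same three-step translation, only it first passes from positive to negative feedback via $\tilde\varphi=-\varphi$ (sector $[-K,0]$) before invoking condition (iv.c) of \cite[Corollary 4.5]{miranda2018analysis}; your caveats on coprimeness, contour indentations, and the relational-inverse alternative are more careful than the paper, which tacitly assumes no pole--zero cancellations in forming $G$ from $C$ and $P$.
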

	\begin{proof}
     $1)$ directly follows from condition (ii) in \cite[Corollary 4.5]{miranda2018analysis}, since $C^{-1}(s)+2P(s)$ is the inverse of $G(s)$, whose zeros are the poles of $G(s)$.
	
	$2)$ corresponds to condition (iii)\cite[Corollary 4.5]{miranda2018analysis}, which requires that $G(s)$ has $2$ dominant poles. Representing controller and the plant  transfer functions as ratio of polynomials
	\[C(s)=\frac{n_c(s)}{d_c(s)},\quad P(s)=\frac{n_p(s)}{d_p(s)}\]
	we have
	\[\begin{split}
	    C^{-1}&(s-\lambda)+2P(s-\lambda) =\\
	    &=\frac{2n_c(s-\lambda)n_p(s-\lambda)+d_c(s-\lambda)d_p(s-\lambda)}{n_c(s-\lambda)d_p(s-\lambda)}.
	\end{split}\]
	The roots of the numerator are the poles of $G(s-\lambda)$. The roots of the denominator are the zeros of $C(s-\lambda)$ and the poles of $P(s-\lambda)$. Following the principle of the argument in complex analysis, $G(s)$ has $2$ dominant poles if and only if the Nyquist plot of $C^{-1}(s-\lambda)+2P(s-\lambda)$ has $2-(q+r)$ clockwise encirclement of the origin.
	
	$3)$ adapts condition (iv.c) of \cite[Corollary 4.5]{miranda2018analysis} to the positive feedback between $G^{-1}(s)$ and $\varphi^{-1}$. First, consider $G(s)$ in negative feedback with $\Tilde{\varphi}=-\varphi$, whose sector conditions reads
	$\partial\Tilde{\varphi}\in[-K,0]$. Then, from condition (iv.c) of  \cite[Corollary 4.5]{miranda2018analysis}, the Nyquist plot of $G(s-\lambda)$ should remain to the left of the vertical line passing through the point  $1/K$. This separation must be preserved when we take the inverse of both $G(s)$ and $\Tilde{\varphi}$. That is, the region to the right of the vertical line passing through $1/K$ maps into the interior of the disk \eqref{eq:disk} and the Nyquist plot of $G^{-1}(s-\lambda)$ must remain outside the disk \eqref{eq:disk}.
	\end{proof}

Theorem \ref{th:inverse_circle_cirteria} further decouples the linear part of the Lur'e system
in Figure \ref{fig:XCP_OSC_Block} into the addition of the plant $P$ and of the inverse of the controller $C$. This simplifies
the design of the controller transfer function $C(s)$.  Given the plant $P(s)$, one can simply derive the Nyquist plot of the shifted transfer function $2P(s-\lambda)$ and the disk \eqref{eq:disk} on the complex plane. Then, one can design the inverse transfer function $C^{-1}(s-\lambda)$ to shape this Nyquist plot, to guarantee that  
$C^{-1}(s-\lambda)+2P(s-\lambda)$ fulfills the requirements of Theorem \ref{th:inverse_circle_cirteria}. This is discussed more in detail in Section \ref{sec:RLCcontroller}.

\section{Controlled instability in $2$-dominant systems for oscillations}\label{sec:instability}
As discussed in the previous section, our design of a closed loop oscillator requires
a $2$-dominant closed loop whose equilibria are all \emph{unstable} (Theorem \ref{th:p-attractor}). For simplicity, we will restrict our design to the case where the closed-loop system has only one equilibrium point, at the origin. For instance, 
the equilibrium points of the cross-coupled circuit in Figure \ref{fig:XCP_OSC_Block} 
must satisfy
\begin{equation}\label{eq:det_fixed_point}
    \frac{1}{G(0)}\Delta V=\varphi(\Delta V),
\end{equation}
where $G(0)\geq0$, due to passivity of $G(s)$. 
The case of a single equilibrium occurs when 
\begin{equation}
  \frac{1}{G(0)}\geq K ,
\end{equation}
where $K$ is the largest slope of the sigmoidal nonlinearity, given by $K = \sqrt{k_nI}$ as shown in \eqref{eq:sector_cond}. Instability is guaranteed for large
$K$, that is, for large current $I$.

\begin{proposition}\label{prop:unstable}
For a general stable linear system $G(s)$, the positive feedback system formed by $G(s)$ with gain $K$ will be unstable when $K$ is sufficiently large.
\end{proposition}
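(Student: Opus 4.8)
The plan is to locate the poles of the closed loop --- the solutions of $1 - KG(s) = 0$, equivalently of $G(s) = 1/K$ --- and to exhibit one of them in the open right half plane for every sufficiently large $K$. Since $G$ is stable it has no poles in the closed right half plane, so the restriction $g := G|_{[0,\infty)}$ to the nonnegative real axis is a well-defined, continuous, real-valued function, and I would simply apply the intermediate value theorem to it. Two properties of $G$ are used, both natural here: $G(0) > 0$, which holds by passivity of $G$ and is in fact already presupposed in \eqref{eq:det_fixed_point}, where one divides by $G(0)$; and $g(\sigma) \to 0$ as $\sigma \to +\infty$, i.e.\ strict properness of $G$ (recall that a positive-real transfer function has relative degree at most one).

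The steps are then: (i)~observe that $g$ is continuous on $[0,\infty)$ with $g(0) = G(0) > 0$ and $\lim_{\sigma\to+\infty} g(\sigma) = 0$; (ii)~for any $K > 1/G(0)$ the value $1/K$ lies strictly between $0$ and $g(0)$, so the intermediate value theorem gives some $\sigma_K \in (0,\infty)$ with $G(\sigma_K) = 1/K$; (iii)~check that there is no pole--zero cancellation --- $\sigma_K$ is not a pole of $G$ (all of those are in the open left half plane) and $G(\sigma_K) = 1/K \neq 0$ --- so $\sigma_K$ is a genuine pole of the closed-loop transfer function $G(s)/(1-KG(s))$; (iv)~conclude that for all $K > 1/G(0)$ the closed loop has a pole at $\sigma_K > 0$, hence is unstable. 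The same fact reads nicely through the root locus: writing $G = n/d$ with relative degree $\rho = \deg d - \deg n \ge 1$, the closed-loop characteristic polynomial $d(s) - K n(s)$ has $\rho$ roots escaping to infinity as $K \to \infty$ along asymptotes with angles $2\pi k/\rho$, $k = 0,\dots,\rho-1$, and the $k = 0$ branch runs to $+\infty$ along the positive real axis --- which is precisely the root $\sigma_K$ found above.

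The computation is a one-liner; the real work is in pinning down the hypotheses, since the statement for a genuinely arbitrary stable $G$ is false. If $G$ is only proper (relative degree $0$) the loop can re-stabilise at high gain --- for instance $G(s) = (s+2)/(s+1)$ is stable and positive-real, yet under positive feedback it is unstable only for $K \in (1/2, 1)$ and stable again for every $K > 1$, the escaping pole being recaptured by the open-loop zero at $-2$. And $G(0) \le 0$ must be excluded: $G(s) = -1/(s+1)$ is stable and remains stable under positive feedback for all $K > 0$. So I would state and prove the proposition under the paper's standing assumptions --- $G$ stable, strictly proper, and passive with $G(0) > 0$ --- under which the intermediate value argument goes through verbatim and additionally pins the instability threshold to $K > 1/G(0)$.
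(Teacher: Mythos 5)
Your proof is correct and takes a genuinely different route from the paper's. The paper disposes of Proposition~\ref{prop:unstable} in two lines by quoting the real-axis rule for the positive-feedback root locus: the segment of the real axis to the right of all poles and zeros always belongs to the positive root locus, so a branch escapes along the positive real axis and instability follows for large $K$. You instead solve the characteristic equation $G(s)=1/K$ directly on $[0,\infty)$ by the intermediate value theorem, which buys three things the paper's argument does not deliver. First, rigour: membership of the positive real axis in the root locus by itself only says some gain places a pole there; your IVT step actually produces a right-half-plane pole $\sigma_K$ for every $K>1/G(0)$, and your asymptote remark explains why the paper's sketch is morally the same computation. Second, an explicit threshold $K>1/G(0)$, which dovetails with Proposition~\ref{prop:change_stable} and with the tension the paper then discusses between instability and the single-equilibrium bound $K\leq 1/G(0)$. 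Third, and most usefully, you identify that the proposition is false as literally stated for a ``general stable linear system'': your counterexamples $G(s)=(s+2)/(s+1)$ (biproper, re-stabilises for $K>1$) and $G(s)=-1/(s+1)$ (never destabilises) show that strict properness and $G(0)>0$ are genuinely needed, not cosmetic; the paper's root-locus one-liner silently assumes both (a positive leading coefficient for the real-axis rule, and relative degree at least one for the escaping asymptote). Since the paper's standing assumptions --- $G$ passive, hence $G(0)\geq 0$, with $G(0)>0$ already presupposed in \eqref{eq:det_fixed_point}, and $G$ strictly proper for the RLC/RC controllers actually used --- supply exactly these hypotheses, your sharpened statement is the right one, and your proof is a sound and preferable replacement.
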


\begin{proof}
From root locus analysis: ``all sections of the real axis with an even number of poles and zeros to their right belong to the root locus for positive feedback''. Thus the right half real axis belongs to the root locus and instability is expected when the positive feedback gain $K$ is large. 
\end{proof}

Proposition \ref{prop:unstable} guarantees the instability of the origin when $K$ is large
but this is in conflict with the bound $K\leq\frac{1}{G(0)}$, which is needed 
to guarantee a single equilibrium.
This can be resolved by having a slow zero $z_s$ in $G(s)$ in addition to $2$-dominance. 
In fact, when $z_s$ gets close to the origin,
\[\lim_{z_s\rightarrow0}\frac{1}{G(0)}=\infty,\] 
i.e. $K$ can be arbitrarily large. The observation in Proposition \ref{prop:change_stable} below further clarifies the importance of a slow zero for our design.

\begin{proposition}\label{prop:change_stable}
    The positive root locus of a linear system $G(s)$ passes through the origin for the  gain $K=\frac{1}{G(0)}$.
\end{proposition}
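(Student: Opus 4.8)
The plan is to read the result directly off the closed-loop characteristic equation. Linearizing the Lur'e loop of Figure \ref{fig:XCP_OSC_Block} at the origin replaces $\varphi$ by its slope; since we are tracking the \emph{positive} feedback of $G(s)$ with a static gain $K$, the closed-loop characteristic equation is $1 - KG(s) = 0$, equivalently $G(s) = 1/K$. By definition the positive root locus is the set of $s\in\mathbb{C}$ for which this holds for some $K>0$, and a point $s_0$ belongs to the locus at gain $K_0$ exactly when $G(s_0)=1/K_0$.

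First I would evaluate at $s=0$. Writing $G(s)=n(s)/d(s)$ with $n,d$ coprime polynomials, the closed-loop characteristic polynomial $d(s)-Kn(s)$ evaluated at $s=0$ equals $d(0)-Kn(0)$, which vanishes if and only if $K = d(0)/n(0) = 1/G(0)$. Conversely, substituting $K=1/G(0)$ makes $s=0$ a root of $d(s)-Kn(s)$, so the origin lies on the positive root locus precisely at gain $K=1/G(0)$. The same conclusion can be read from the fixed-point condition \eqref{eq:det_fixed_point}: linearizing $\varphi$ at the origin with slope $K$ turns it into $\tfrac{1}{G(0)}\Delta V = K\Delta V$, which admits a nontrivial direction exactly when $K=1/G(0)$, i.e.\ the equilibrium becomes degenerate at that gain.

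There is no real obstacle here beyond a standing nondegeneracy assumption: the statement only makes sense when $G(0)$ is finite and nonzero, i.e.\ $G(s)$ has neither a pole nor a zero at the origin. Passivity of $G(s)$ already gives $G(0)\ge 0$ (as noted after \eqref{eq:det_fixed_point}); excluding an open-loop integrator and an open-loop zero at the origin makes $1/G(0)$ a well-defined positive gain. The point of recording this fact is its interplay with Proposition \ref{prop:unstable}: if $G(s)$ carries a slow zero $z_s$, then $1/G(0)\to\infty$ as $z_s\to 0$, so the gain at which the root locus leaves the origin — and, by Proposition \ref{prop:unstable}, crosses into the right half plane — can be pushed arbitrarily high while still respecting the single-equilibrium bound $K\le 1/G(0)$.
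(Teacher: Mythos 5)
Your proof is correct and takes essentially the same route as the paper's: both evaluate the positive-feedback characteristic equation $1-KG(s)=0$ at $s=0$ to read off $K=1/G(0)$. The extra care you take about $G(0)$ being finite and nonzero is a reasonable (implicit in the paper) nondegeneracy remark, but the core argument is identical.
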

\begin{proof}
The positive feedback root locus satisfies
\[1-KG(s)=0\]
For $K=\frac{1}{G(0)}$, $s=0$ is a pole of the feedback system.
\end{proof}

Proposition \ref{prop:change_stable} states that one pole of the positive feedback system changes its stability at $K=\frac{1}{G(0)}$. If the pole crosses the imaginary axis from the right half plane, then there must be $K<\frac{1}{G(0)}$ such that the origin is unstable. To ensure such transition, it is necessary for $G(s)$ to have a slow zero $z_s$ on the real axis, as illustrated by the positive feedback root locus of $G(s)$  in Figure \ref{fig:Rlocus_Zero}. For conciseness, we ignore the root locus of non-dominant poles and zeros, i.e. those to the left of $-\lambda$. For a passive $G(s)$, the non-dominant poles will converge to the non-dominant zeros, remaining to the left of $-\lambda$. 

\begin{figure}[htbp]
	\centering
	\includegraphics[width=0.35\textwidth]{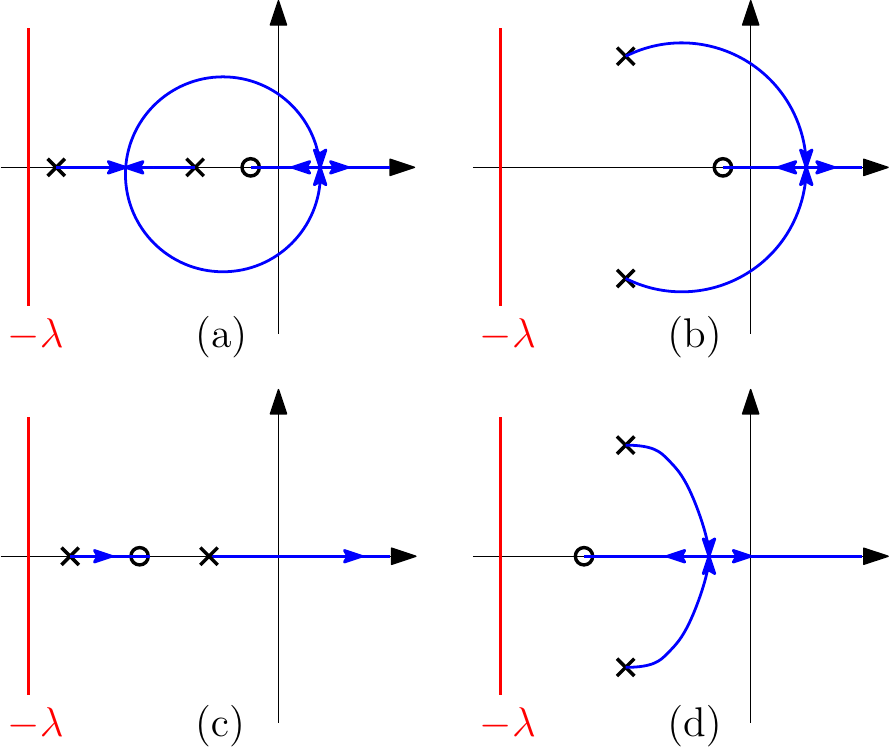}
	\caption{Positive root locus of $G(s)$ with $2$ dominant poles. (a),(b) $G(s)$ has a slow zero; (c),(d) $G(s)$ has no slow zero.}
	\label{fig:Rlocus_Zero}
\end{figure}

 Figure \ref{fig:Rlocus_Zero} (a),(b) are the root locus of the dominant poles of $G(s)$ with a slow zero. As $K$ increases, the two dominant poles will first become a pair of unstable complex conjugate and eventually will meet the right half real axis. One pole will keep moving to the right while the other will converge to the slow zero, crossing the imaginary axis at $K=\frac{1}{G(0)}$. By contrast in Figure \ref{fig:Rlocus_Zero} (c),(d), when there is no slow zero, only one dominant pole can cross the imaginary axis to the right half plane as $K$ increases. In these cases, Proposition \ref{prop:change_stable} shows that the origin is always stable for $K\leq\frac{1}{G(0)}$. 
Note that the zeros of $G(s)=\frac{C(s)}{1+2P(s)C(s)}$ are
given by the zeros of $C(s)$ and by the poles of $P(s)$. Thus, to introduce a slow zero in $G(s)$, we can either design $C(s)$ with a slow zero; or we can take advantage of any slow poles in $P(s)$.

\section{Oscillation control with RLC circuits}\label{sec:RLCcontroller}
Following Section \ref{sec:dominance} and \ref{sec:instability}, the design of the cross coupled controller proceeds as follows: 
\begin{enumerate}
    \item Given the plant admittance $P(s)$ and the XCP nonlinearity $\varphi$, select a dominant rate $\lambda$ and the impedance $C(s)$ such that
    \begin{itemize}
        \item $G(s)=\frac{C(s)}{1+2P(s)C(s)}$ has a slow zero;
        \item the conditions of Theorem \ref{th:inverse_circle_cirteria} are satisfied.
    \end{itemize}
    \item Induce the instability of the origin for $K=\sqrt{k_nI}\leq\frac{1}{G(0)}$, by tuning the current $I$.
\end{enumerate}

In this section we use the inverse circle criterion of Section \ref{sec:XCP_dominance} and the controlled instability of Section \ref{sec:instability} to derive conditions of oscillations for simple  impedances $C(s)$ given by the RLC and RC circuits of Figure \ref{fig:Controller_Circuit}. Under mild conditions, these simple circuits are enough to trigger oscillations.  

\begin{figure}[htbp]
	\centering
	\includegraphics[width=0.45\columnwidth]{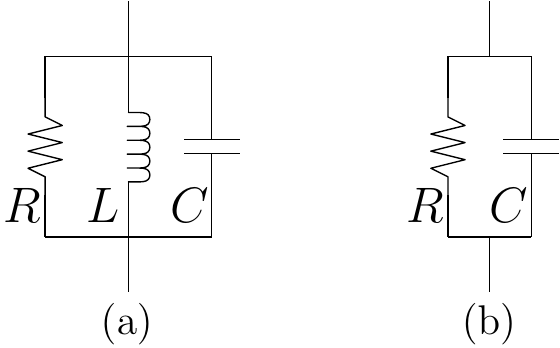}
	\caption{RLC and RC controller candidates for $C$.}
	\label{fig:Controller_Circuit}
\end{figure}

The two circuits have transfer function 
\begin{align}
    \begin{split}\label{eq:C_rlc}
         C_{rlc}(s)&=\frac{1}{1/R+1/Ls+Cs} =\frac{RLs}{RLCs^2+Ls+R}
    \end{split}
    \\
    C_{rc}(s)&=\frac{1}{1/R+Cs}\label{eq:C_rc}
\end{align}
where $R$, $L$ and $C$ are the resistance, inductance and capacitance of the circuit, respectively. The $RLC$ circuit has a slow zero at $0$, while the $RC$ circuit provides no zeros. In agreement with Section \ref{sec:instability}, the latter needs to be paired with 
plant admittances $P(s)$ which have a slow pole. 

The shifted inverse transfer functions read
\begin{align}
    C^{-1}_{rlc}(s-\lambda)&=\underbrace{\frac{1}{R}-C\lambda}_{\text{horizontal shift}}+\frac{1}{L(s-\lambda)}+Cs \,,\\
    C^{-1}_{rc}(s-\lambda)&=\underbrace{\frac{1}{R}-C\lambda}_{\text{horizontal shift}}+Cs\,.
\end{align}

\begin{figure}[htbp]
\vspace{5mm}
	\centering
	\includegraphics[width=0.92\columnwidth]{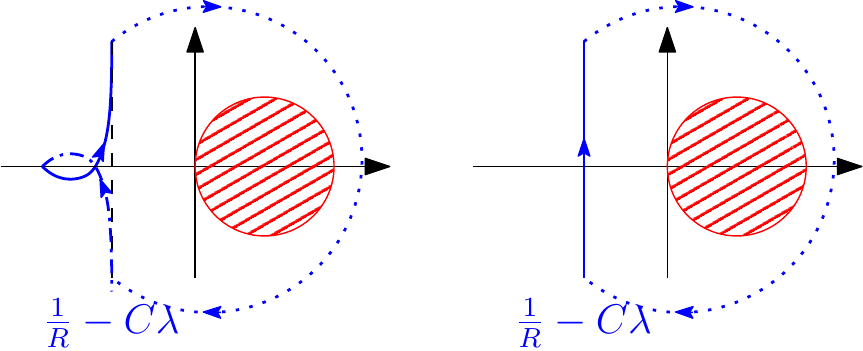}
	\caption{Nyquist plots of $C^{-1}_{rlc}(s-\lambda)$, lLeft, and $C^{-1}_{rc}(s-\lambda)$, Right, and the disk \eqref{eq:disk}.}
	\label{fig:Nplot_RLC_inver}
\end{figure}

Looking at their Nyquist plots in Figure \ref{fig:Nplot_RLC_inver}, both
Nyquist locii of $C^{-1}_{rlc}(s-\lambda)$ and $C^{-1}_{rc}(s-\lambda)$ 
encircle clockwise any
subregion to the right of the vertical line passing through $1/R-C\lambda$.
This feature leads to the following theorems.

\begin{theorem}\label{th:design_rlc}
    Consider the impedance $C(s) = C_{rlc}(s)$ in \eqref{eq:C_rlc} and a dominant rate $\lambda>0$. For any plant $P(s)$ that satisfies
    \begin{enumerate}
        \item $C_{rlc}^{-1}(s)+2P(s)$ has no zeros with real part at $-\lambda$,
        \item $P(s-\lambda)$ has no unstable poles,
        \item $\max_{\omega}(\Re(2P(j\omega-\lambda))+\frac{1}{R}-C\lambda< 0$,
    \end{enumerate}
    the closed-loop system in Figure \ref{fig:XCP_OSC_Block} has a stable limit cycle  
    for $K\leq\frac{1}{G(0)}$ sufficiently large. 
    
\end{theorem}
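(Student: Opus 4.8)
The plan is to verify the three hypotheses of Theorem~\ref{th:inverse_circle_cirteria} to obtain strict $2$-dominance with rate $\lambda$, then to use the root-locus argument of Section~\ref{sec:instability} to make the (unique) equilibrium unstable, and finally to invoke Theorem~\ref{th:p-attractor}.

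\emph{Step 1 (2-dominance).} For $C(s)=C_{rlc}(s)$, hypothesis~(1) of Theorem~\ref{th:inverse_circle_cirteria} is assumption~(1) of the statement verbatim. For hypothesis~(3), the explicit form $C^{-1}_{rlc}(s-\lambda)=\tfrac1R-C\lambda+\tfrac{1}{L(s-\lambda)}+Cs$ gives
\[
\Re\!\bigl(C_{rlc}^{-1}(j\omega-\lambda)\bigr)=\frac1R-C\lambda-\frac{\lambda}{L(\lambda^2+\omega^2)}\le\frac1R-C\lambda ,
\]
so assumption~(3) puts the whole Nyquist locus of $H_\lambda(s):=C_{rlc}^{-1}(s-\lambda)+2P(s-\lambda)$ in the open left half plane; since the disk \eqref{eq:disk} sits in the closed right half plane, hypothesis~(3) then holds for \emph{every} gain $K>0$. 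For hypothesis~(2): $q=0$ by assumption~(2), $r=1$ because $C_{rlc}$ has a single simple zero at $0$ (sent to $\lambda\in\mathbb{C}^+$ by the shift), and assumptions~(1)--(2) keep the finite poles of $H_\lambda$ (at $s=\lambda$ and at the shifted poles of $P$) off the imaginary axis, so the Nyquist plot is well posed and avoids the origin. The encirclement count is then forced: the term $Cs$ drives the locus to $(\tfrac1R-C\lambda)\pm j\infty$ as $\omega\to\pm\infty$, so the semicircular closure at infinity changes the argument by $-\pi$, while on the imaginary axis the locus is a connected curve contained in the open left half plane --- a region whose argument set is an open arc of length $\pi$ --- so it cannot complete a turn and its argument must change by exactly $-\pi$ between the limits at $\mp j\infty$. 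The total argument change is $-2\pi$, i.e.\ one clockwise encirclement of the origin $=2-(q+r)$, and Theorem~\ref{th:inverse_circle_cirteria} yields strict $2$-dominance with rate $\lambda$, for every $K>0$.

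\emph{Step 2 (unique unstable equilibrium).} Since $C_{rlc}(0)=0$ we have $G(0)=C_{rlc}(0)/\bigl(1+2P(0)C_{rlc}(0)\bigr)=0$, hence $1/G(0)=+\infty$ (the bound ``$K\le 1/G(0)$'' imposes nothing in this case) and \eqref{eq:det_fixed_point} forces $\Delta V=0$: the origin is the only equilibrium. Its linearization is $G(s)$ in positive feedback with gain $\partial\varphi(0)=K=\sqrt{k_nI}$; since $G(s)$ is passive (hence stable) and strictly proper, Proposition~\ref{prop:unstable} makes the origin unstable once $K$, i.e.\ the current $I$, is large enough. This is exactly where the slow zero of $C_{rlc}$ at $0$ is used: as in Section~\ref{sec:instability} and Figure~\ref{fig:Rlocus_Zero}, it is what allows a closed-loop pole to cross into $\mathbb{C}^+$ as $K$ grows while a single equilibrium persists.

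\emph{Step 3 (conclusion).} Stability of $G(s)$ together with $|\varphi|\le I$ gives bounded trajectories (BIBO), as noted after Theorem~\ref{th:p-attractor}. Thus, for $K$ large, the closed loop is strictly $2$-dominant with rate $\lambda$, has bounded trajectories, and has a unique, unstable equilibrium; by Theorem~\ref{th:p-attractor} every bounded trajectory converges to a fixed point, a set of fixed points with connecting arcs, or a limit cycle, and the uniqueness and instability of the equilibrium exclude the first two options (as in the design rationale of Section~\ref{sec:dom_sys}), leaving an attracting limit cycle, that is, a sustained oscillation. The step I expect to be the main obstacle is the encirclement count in Step~1 --- making rigorous, from assumption~(3) and the explicit form of $C_{rlc}^{-1}(s-\lambda)$, that the left-half-plane locus winds around the origin exactly once; everything else is a direct assembly of Theorem~\ref{th:inverse_circle_cirteria}, Theorem~\ref{th:p-attractor} and Proposition~\ref{prop:unstable}. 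Secondary points needing care are the degenerate value $G(0)=0$ and, for a fully rigorous ``stable limit cycle'', the exclusion of homoclinic loops around the unstable equilibrium.
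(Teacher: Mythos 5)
Your proposal is correct and follows essentially the same route as the paper: assumption (3) pins the Nyquist locus of $C_{rlc}^{-1}(s-\lambda)+2P(s-\lambda)$ in the open left half plane (using $\Re(C_{rlc}^{-1}(j\omega-\lambda))\le \tfrac1R-C\lambda$), which settles conditions (2) and (3) of the inverse circle criterion with $q=0$, $r=1$, and the conclusion then follows from the slow zero of $C_{rlc}$ at the origin, Proposition~\ref{prop:unstable}, and Theorem~\ref{th:p-attractor}. Your direct winding-number count for the single clockwise encirclement, and your explicit observation that $G(0)=0$ makes the bound $K\le 1/G(0)$ vacuous, are slightly more detailed than the paper's figure-based argument but are refinements of, not departures from, its proof.
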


\begin{proof}
Condition 1) of Theorem \ref{th:design_rlc} corresponds to Condition $1)$ 
of Theorem \ref{th:inverse_circle_cirteria}. 

We now look at Condition 2) of Theorem \ref{th:inverse_circle_cirteria}. 
For $\lambda>0$, $C_{rlc}(s-\lambda)$ provides an unstable zero at $0$. That is $r=1$.
$P(s-\lambda)$ has no unstable poles, therefore $q=0$.
Thus Condition 2) of Theorem \ref{th:inverse_circle_cirteria} requires that the Nyquist plot of $C^{-1}_{rlc}(s-\lambda)+2P(s-\lambda)$ has $2-1=1$ clockwise encirclement of the origin. 
This is verified as follows: the Nyquist plot of $C^{-1}_{rlc}(s-\lambda)$ does one clockwise encirclement of the origin, as shown in Figure \ref{fig:Nplot_RLC_inver}, left.
Furthermore, Condition 3) of Theorem \ref{th:design_rlc} guarantees that
the Nyquist plot of $C^{-1}_{rlc}(s-\lambda)+2P(s-\lambda)$ has the same number of clockwise encirclements of the Nyquist plot of $C^{-1}_{rlc}(s-\lambda)$.
The latter follows from the fact that 
$
\Re ( C^{-1}_{rlc}(j \omega-\lambda)+2P(j \omega-\lambda) ) 
<
\Re ( 2P(j \omega-\lambda) ) + \frac{1}{R} - C\lambda < 0
$
for $\omega\in (-\infty,\infty)$.

The argument for Condition 3) of Theorem \ref{th:inverse_circle_cirteria} is similar. The Nyquist plot of $C^{-1}_{rlc}(s-\lambda)$ 
encircles the disk \eqref{eq:disk}. Therefore, 
$\Re ( C^{-1}_{rlc}(j \omega-\lambda)) + \frac{1}{R}-C\lambda < 0$ guarantees that
$C^{-1}_{rlc}(s-\lambda)+2P(s-\lambda)$ encircles the disk as well.

 Theorem \ref{th:inverse_circle_cirteria} thus guarantees that the closed loop
 is $2-$dominant. $G(s)$ has a slow zero at zero (from $C_{rlc}(s)$). 
 This means that the closed loop will have a single unstable equilibrium at zero
 for $K=\sqrt{k_nI}\leq\frac{1}{G(0)}$ sufficiently large. 
 By Theorem \ref{th:p-attractor}, the closed loop must have a stable limit cycle
 for $K=\sqrt{k_nI}\leq\frac{1}{G(0)}$ sufficiently large.
\end{proof}

\begin{theorem}\label{th:design_rc}
Consider the impedance $C(s) = C_{rc}(s)$ in \eqref{eq:C_rc} and a dominant rate $\lambda>0$. For any plant $P(s)$ that satisfies
    \begin{enumerate}
        \item $C_{rc}^{-1}(s)+2P(s)$ has no zeros with real part at $-\lambda$,
        \item $P(s-\lambda)$ has one unstable pole, which is to the right of all the poles of $C_{rc}(s-\lambda)$,
        \item $\max_{\omega}(\Re(2P(j\omega-\lambda))+\frac{1}{R}-C\lambda < 0$,
    \end{enumerate}
        the closed-loop system in Figure \ref{fig:XCP_OSC_Block} has a stable limit cycle  
    for $K\leq\frac{1}{G(0)}$ sufficiently large. 
\end{theorem}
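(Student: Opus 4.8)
The plan is to mirror the structure of the proof of Theorem \ref{th:design_rlc}, replacing the role of the slow zero in $C_{rlc}(s)$ with the slow pole that Condition 2) of Theorem \ref{th:design_rc} places in $P(s-\lambda)$. First I would verify the hypotheses of Theorem \ref{th:inverse_circle_cirteria}. Condition 1) of Theorem \ref{th:design_rc} is exactly Condition 1) of Theorem \ref{th:inverse_circle_cirteria} applied to $C_{rc}^{-1}(s)+2P(s) = G^{-1}(s)$. For Condition 2) of Theorem \ref{th:inverse_circle_cirteria}, note that $C_{rc}(s-\lambda)$ in \eqref{eq:C_rc} has no zeros (only a pole at $-1/(RC)-\lambda$), so $r=0$; and by Condition 2) of Theorem \ref{th:design_rc}, $P(s-\lambda)$ has exactly one unstable pole, so $q=1$. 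Hence the required number of clockwise encirclements of the origin is $2-(q+r)=2-1=1$, and I must show the Nyquist plot of $C_{rc}^{-1}(s-\lambda)+2P(s-\lambda)$ achieves exactly one.

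The encirclement count and the disk separation both follow from the same observation used in the RLC case: Condition 3) of Theorem \ref{th:design_rc} gives
\[
\Re\bigl(C_{rc}^{-1}(j\omega-\lambda)+2P(j\omega-\lambda)\bigr)
\;\le\;
\Re\bigl(2P(j\omega-\lambda)\bigr)+\tfrac{1}{R}-C\lambda
\;<\;0
\qquad\text{for all }\omega\in(-\infty,\infty),
\]
since $\Re(C_{rc}^{-1}(j\omega-\lambda)) = \tfrac{1}{R}-C\lambda$ (the $Cs$ term is purely imaginary on $s=j\omega$). Thus the whole Nyquist locus of $C_{rc}^{-1}(s-\lambda)+2P(s-\lambda)$ lies in the open left half plane, so it cannot touch the disk \eqref{eq:disk} (which sits in the closed right half plane), giving Condition 3) of Theorem \ref{th:inverse_circle_cirteria}; and because the locus stays strictly left of the imaginary axis, it has the same winding number about the origin as the Nyquist plot of $C_{rc}^{-1}(s-\lambda)$ alone — a continuous deformation keeping the real part negative cannot change the encirclement count of a point on the imaginary axis. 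From Figure \ref{fig:Nplot_RLC_inver}, right, the Nyquist plot of $C_{rc}^{-1}(s-\lambda)$ makes exactly one clockwise encirclement of any point to the right of the vertical line through $\tfrac{1}{R}-C\lambda$, in particular of the origin, so Condition 2) of Theorem \ref{th:inverse_circle_cirteria} holds. Theorem \ref{th:inverse_circle_cirteria} then certifies that the closed loop is strictly $2$-dominant with rate $\lambda$.

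It remains to handle instability and the single-equilibrium condition, exactly as at the end of the RLC proof. Here $G(s)=\tfrac{C_{rc}(s)}{1+2P(s)C_{rc}(s)}$ has a slow zero arising from the slow (unstable) pole of $P(s)$ — recall from Section \ref{sec:instability} that the zeros of $G(s)$ are the zeros of $C(s)$ together with the poles of $P(s)$. By Condition 2) of Theorem \ref{th:design_rc} this slow zero lies to the right of all poles of $C_{rc}(s-\lambda)$, which, by the root-locus discussion around Figure \ref{fig:Rlocus_Zero}(a),(b) and Propositions \ref{prop:change_stable} and \ref{prop:unstable}, is precisely what is needed for one dominant pole of the positive-feedback interconnection to cross into $\mathbb{C}^+$ at a gain $K<\tfrac{1}{G(0)}$ while the other converges to that slow zero. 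Consequently there is a range of $K=\sqrt{k_nI}\le\tfrac{1}{G(0)}$, realizable by tuning $I$, for which the closed loop has a single equilibrium at the origin and that equilibrium is unstable. Boundedness of trajectories holds because $G(s-\lambda)$ being used with rate $\lambda>0$ and the sigmoid $\varphi$ being bounded keeps trajectories bounded (BIBO-type argument as in Section \ref{sec:dom_sys}). Applying Theorem \ref{th:p-attractor}, every bounded trajectory converges to a simple attractor which, there being no stable equilibrium, must be a limit cycle.

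The main obstacle I anticipate is making the instability argument airtight: unlike the RLC case, the slow zero of $G(s)$ is inherited from an \emph{unstable} pole of the plant $P(s)$, so I need the root-locus picture of Figure \ref{fig:Rlocus_Zero} to apply with a zero that starts in $\mathbb{C}^+$. The hypothesis that this zero lies to the right of all poles of $C_{rc}(s-\lambda)$ is what guarantees it acts as the ``slow zero'' to which one branch of the positive-feedback root locus converges; I would spell out that, because the two dominant poles (brought into $\mathbb{C}^+$ and then onto the real axis as $K$ grows by Proposition \ref{prop:unstable}) split on the real axis with one branch heading to $+\infty$ and the other to the slow zero, Proposition \ref{prop:change_stable} forces the latter branch to cross the imaginary axis at $K=\tfrac{1}{G(0)}$ — hence there is $K\le\tfrac{1}{G(0)}$ with the origin already unstable and, via the single-equilibrium condition $K\le\tfrac{1}{G(0)}$, no other equilibrium to worry about. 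The remaining steps are routine adaptations of the RLC proof.
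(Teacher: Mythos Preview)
Your approach matches the paper's, which simply states that the proof mirrors that of Theorem \ref{th:design_rlc} with $r=0$ and $q=1$ (since $C_{rc}$ has no zeros and $P(s-\lambda)$ has one unstable pole), and that the slow zero of $G(s)$ is now supplied by $P(s)$ rather than $C(s)$. Your anticipated obstacle dissolves once you note that ``unstable pole of $P(s-\lambda)$'' means a pole of the passive plant $P(s)$ lying in the strip $-\lambda<\Re(s)<0$, so the inherited slow zero of $G(s)$ sits in the open left half plane between $-\lambda$ and the origin, exactly as in Figure \ref{fig:Rlocus_Zero}(a),(b) --- no extension of the root-locus picture to a zero in $\mathbb{C}^+$ is needed.
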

\begin{proof}
The proof of Theorem \ref{th:design_rc} is very similar to the proof of Theorem  \ref{th:design_rlc}. The only difference is that $C_{rc}(s)$ has no zero therefore $r=0$. However, Condition 2) in Theorem \ref{th:design_rc} guarantees that  $q=1$.
It also guarantees that $G(s)$ has a slow zero from $P(s)$.
\end{proof}

\section{Example: Oscillation Control of a DC motor} \label{sec:example}
We close the paper with an example on a controlled DC motor, a common electro-mechanical actuator, represented by the block diagram in Figure \ref{fig:DC_motor_block}. The transfer function from the input voltage $V_m$ to the 
output current $I_m$ reads \begin{equation}
    P(s)=\frac{J_ms+b_m}{(L_ms+R_m)(J_ms+b_m)+k_m^2}.
\end{equation}
\begin{figure}[!h]
	\centering
	\includegraphics[width=0.4\textwidth]{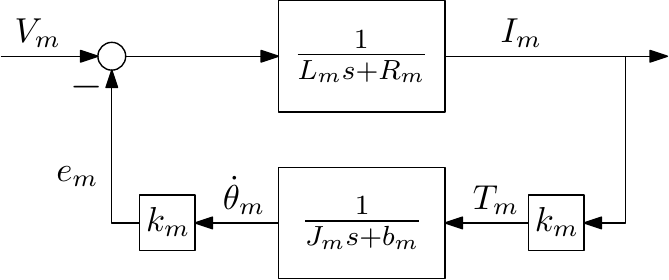}
	\caption{Block diagram representation of the DC motor.}
	\label{fig:DC_motor_block}
\end{figure}

To design our controller we consider nominal parameters $L_m=0.5$, $R_m=2$, $J_m=0.02$, $b_m=0.2$ and $k_m=0.1$. $P(s)$ has one zero at $-10$ and two poles at $-9.83$ and $-4.17$. 
The transistors in the XCP are characterized by $k_n=5$.

For slow oscillations we make $P(s)$ ``fast'' by selecting $\lambda=2$ such that $P(s-\lambda)$ has no right half plane poles. The following three RLC controller circuits all satisfy Theorem \ref{th:design_rlc}, as shown by the Nyquist plots in Figure \ref{fig:DC_motor_design_1} (Top Left): 
\begin{description}
    \item[\textbf{design 1:}] \hspace{.5cm} $R=100$, $L=1$, $C=1$;
    \item[\textbf{design 2:}] \hspace{.5cm} $R=100$, $L=1$, $C=5$;
    \item[\textbf{design 3:}] \hspace{.5cm} $R=100$, $L=5$, $C=1$.
\end{description}
For $I=2$, $K=\sqrt{k_nI}=3.16$, the origin of the controlled system is unstable for all three designs. This achieves stable oscillations with roughly the same magnitude, as shown by simulation results of the DC motor angular velocity $\dot{\theta}_m$ in Figure \ref{fig:DC_motor_design_1}. The resultant oscillation frequencies $\omega_i$ are close to the natural frequency of the RLC circuits $\sqrt{1/LC}$. 
For the same ratio $1/LC$, the choice of $C$ and $L$ affects the the shape of the oscillations, as shown in Design 2 and 3 of Figure \ref{fig:DC_motor_design_1}. The latter are closer to a relaxation-type oscillations and slower. 

For fast oscillation we take $\lambda=8$ and we make the pole of $P(s)$ at $-4.17$ the slowest dominant pole. The Nyquist plot of $P(s-\lambda)$ stays in the left half plane, i.e. $\max_{\omega}(\Re(2P(j\omega-\lambda))\leq 0$. Thus by Theorem \ref{th:design_rc}, every RC controller \eqref{eq:C_rc} with $4.17\leq\frac{1}{RC}\leq 8$ is a feasible controller for oscillation. Here we consider $R=1.5$, $C=0.1$, which leads to the Nyquist plot in Figure \ref{fig:DC_motor_design_2} Left. With this design choice, $G(0)=0.61$ and $I$ must be smaller than $1/k_nG(0)^2=0.54$ for a unique equilibrium point in the closed loop. For $I=0.5$, that is, $K=\sqrt{k_nI}=1.58$, we get stable oscillation as shown in Figure \ref{fig:DC_motor_design_2} Right. In comparison with Figure \ref{fig:DC_motor_design_1}, the small current $I$ restricts the magnitude of the oscillation. The dominant pole of $P(s)$ takes an active role for the instability and hence the oscillation. This suggests that the fast oscillation case is more sensitive to the variations of the motor parameters. 

\begin{figure}[h]
\vspace{-4mm}
	\centering
	\includegraphics[width=0.5\textwidth]{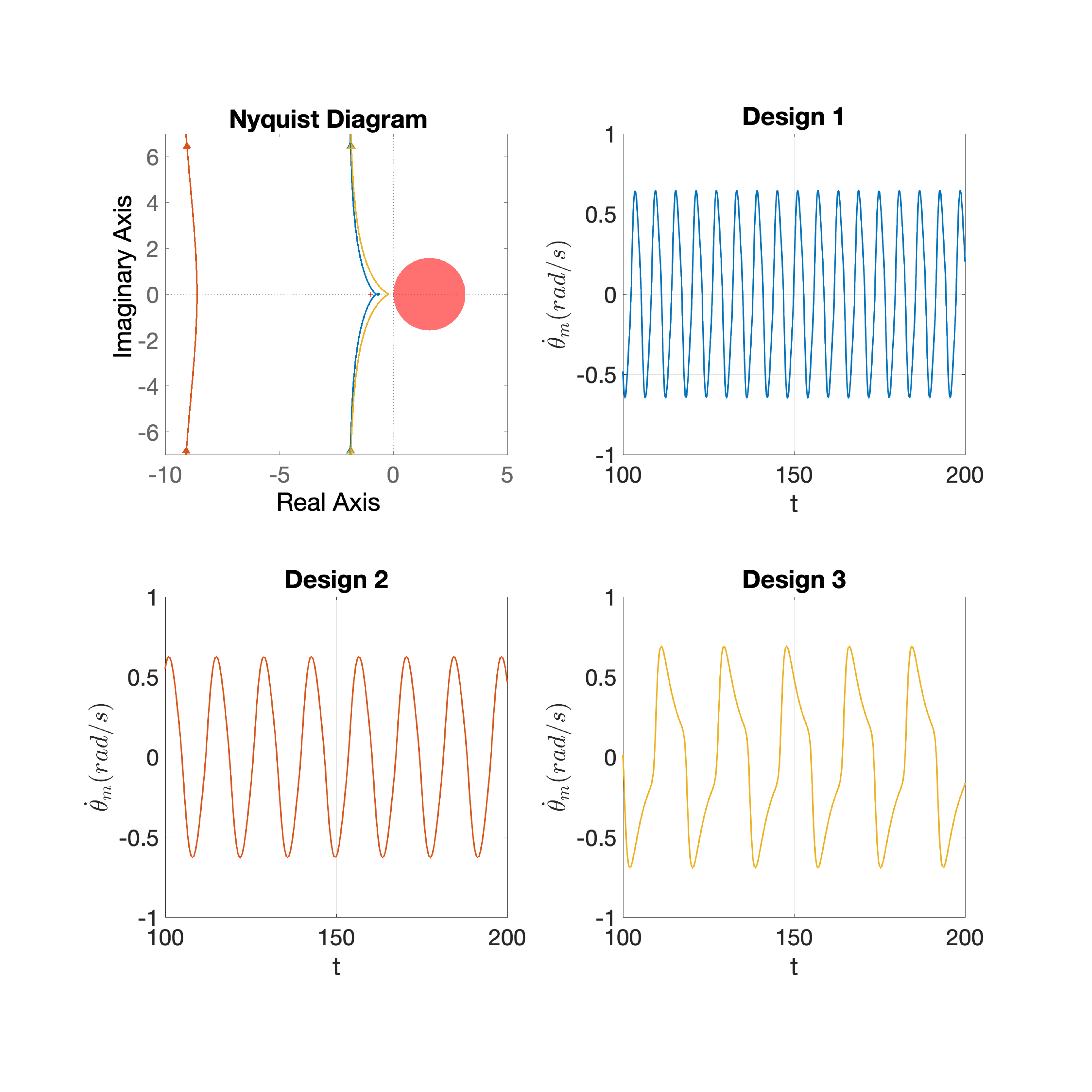}
	\vspace{-6mm}
	\caption{Nyquist plots of $C^{-1}_{rlc}(s-\lambda)+2P(s-\lambda)$ for $\lambda=2$,
	for the three different circuits. The disk \eqref{eq:disk} is for $K=3.16$.
	Simulation results (steady state): design 1 (blue) has frequency $\omega_1=1.07$ rad/s; design 2 (red) has frequency $\omega_2=0.52$ rad/s; design 3 (yellow) achieves $\omega_3=0.42$ rad/s.}
	\label{fig:DC_motor_design_1}
\end{figure}

\begin{figure}[h]
\vspace{-4mm}
	\centering
	\includegraphics[width=0.5\textwidth]{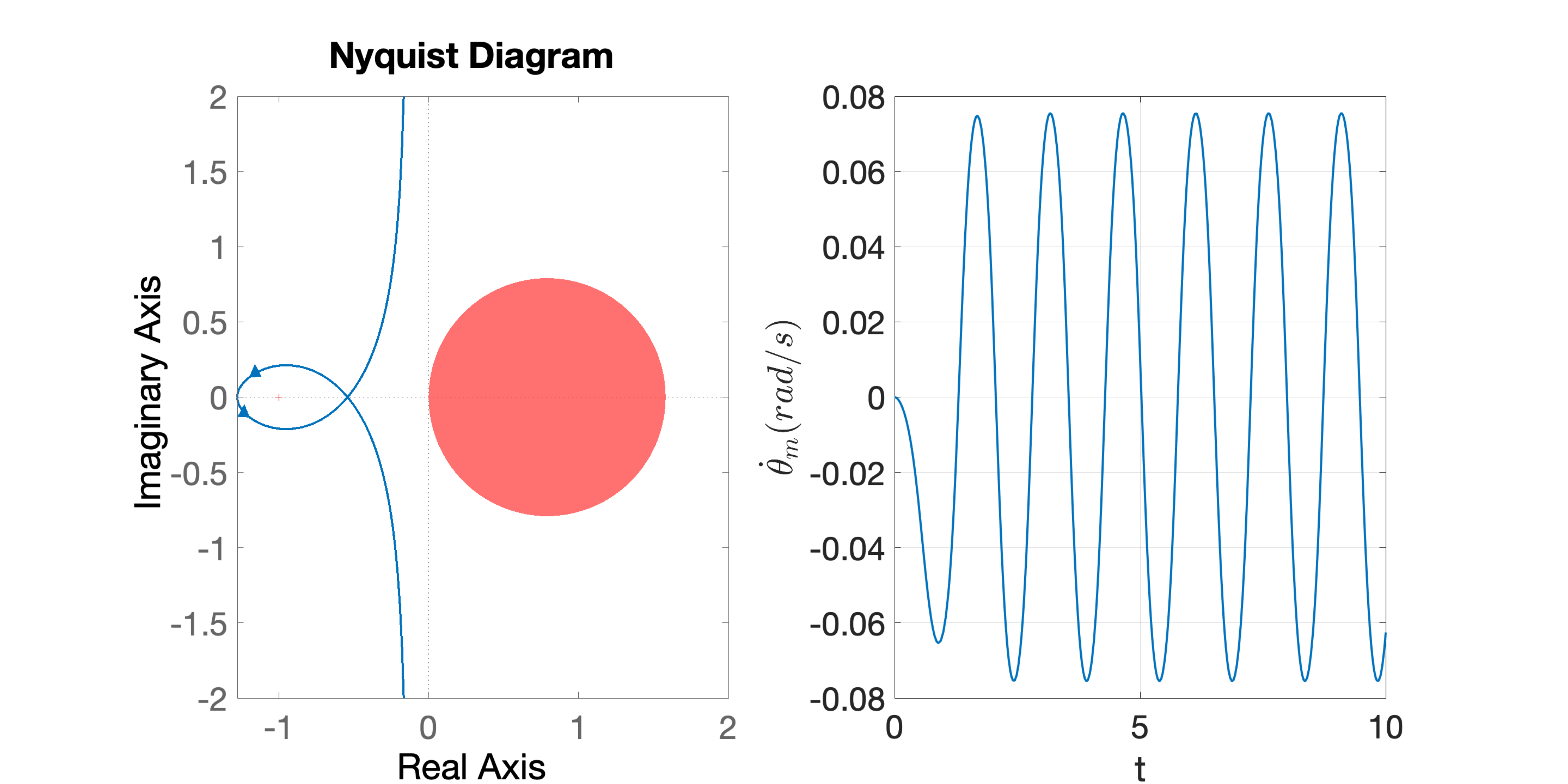}
	\caption{Left: Nyquist plot of $C^{-1}_{rc}(s-\lambda)+2P(s-\lambda)$ for $\lambda = 8$.
	The disk \eqref{eq:disk} is for $K=1.58$.  Right: the closed loop achieves the oscillation frequency $\omega=4.32$ rad/s.}
	\label{fig:DC_motor_design_2}
\end{figure}

\section{CONCLUSIONS}\label{sec:conclusiton} 
    We have shown that the cross coupled oscillator circuit can serve as an analog controller for oscillations. 
    Taking advantage of the XCP as a source of nonlinear positive feedback, we have used dominance theory to derive design conditions for oscillations. The inverse circle criterion for dominance provides guidance on the selection of circuit elements. Stable oscillations arise when the closed-loop system is $2$-dominant with all its equilibria unstable. Our design is also robust to uncertainties on the plant parameters, as small uncertainties induce small perturbations on the Nyquist locii, thus do not invalidate the conditions of the theorems (as long as perturbations are sufficiently small). The design is illustrated on a DC motor, a common electro-mechanical actuator.
   
    To use our approach in applications we will need a more systematic design for robustness and we will need to generalize
    the approach to nonlinear plants. This is particularly
    needed in robotics, to deal with nonlinear dynamics. The other
    important question is related to the features of the oscillation
    pattern. In this paper we have studied simple RLC and RC 
    circuits but larger circuits could be used to induce oscillations with specific features (harmonic vs relaxation, and specific patterns).

	\bibliographystyle{IEEEtran} 
	
\end{document}